\documentclass[a4paper,11pt,dvipdfmx]{article}

\usepackage{graphicx}
\usepackage{amsmath,amssymb,amsthm}
\usepackage{ascmac}

\usepackage{newpxtext,newpxmath}

\usepackage{url}

\usepackage{a4,latexsym,graphics}

\RequirePackage{color}
\theoremstyle{plain}
\newtheorem{theorem}{Theorem}
\newtheorem{proposition}[theorem]{Proposition}
\newtheorem{lemma}[theorem]{Lemma}

\theoremstyle{definition}
\newtheorem{definition}{Definition}
\newtheorem{example}{Example}
\newtheorem{claim}{Claim}

\usepackage{colortbl}

\newcommand{\ci}{\mathrm{CI}}
\newcommand{\cratio}{\mathrm{CR}}
\newcommand{\randindex}{\mathrm{RI}}
\newcommand{\ciast}{\ci^\ast}
\newcommand{\cthree}{c_3}

\newcommand{\lambdamax}{\lambda_\mathrm{max}}
\newcommand{\submat}[2][A]{{#1^{(#2)}}}
\newcommand{\sumset}{I_3}
\newcommand{\sumindex}[1]{{\sumset^{(#1)}}}

\begin{document}

\title{Size Independence of Consistency Index for Pairwise Comparison Matrices in Analytic Hierarchy Process}
\author{Tsuneshi OBATA\thanks{Faculty of Science and Engineering, Otemon Gakuin University, 2-1-15 Nishiai, Ibaraki, Osaka, 567-8502 Japan \\ {e-mail: t-obata@haruka.otemon.ac.jp} ORCID: 0000-0002-5924-112X} \and 
Shunsuke SHIRAISHI\thanks{Faculty of Applied Information Science, Hiroshima Institute of Technology, 2-1-1 Miyake, Saeki-ku, Hiroshima 731-5193 Japan \\ {e-mail: s.shiraishi.wx@it-hiroshima.ac.jp} ORCID: 0000-0003-2401-1938}}

\date{}

\maketitle

\abstract{
	Pairwise comparisons are fundamental in the analytic hierarchy process.
	Various consistency indices have been proposed to assess inconsistencies in these comparisons.
	Since Saaty first proposed his consistency index, the assessment of the degree of consistency in pairwise comparison matrices has remained an open and hot topic in the study of the analytic hierarchy process.
	The consistency indices $\ci$ and $\cratio$ proposed by Saaty are defined using the principal eigenvalue of the pairwise comparison matrix.
	In our previous study, we introduced an alternative index derived from the relationship between the coefficient of the characteristic polynomial and the consistency of comparisons.
	\newline\indent
	Saaty proposed a fixed threshold of 0.1 for $\ci$ or $\cratio$ as a guideline for an acceptable level of consistency, regardless of the matrix size.
	However, whether this threshold represents an equivalent level of consistency across different matrix sizes, that is, across different numbers of evaluation items, remains unclear.
	This study analysed the relationship between consistency and matrix size by examining pairwise comparison matrices constructed from subsets of evaluation items.
	Based on this analysis, we propose the fundamental property to be satisfied by a size-independent consistency index.
	\newline\indent
	Furthermore, we refine our previously proposed index to ensure that it satisfies this property, demonstrating that it coincides with the existing consistency index.
	Finally, we visualise the relationship between the matrix size and consistency index values using randomly generated pairwise comparison matrices, thereby providing insights into the impact of matrix size on consistency evaluation.
}

\section{Introduction}

Pairwise comparison matrices (PCMs) are fundamental tools in the analytic hierarchy process (AHP) and other decision making methods, which enable decision makers to express their relative preferences between items\footnote{In this paper, we employ the term \emph{items} to refer to the objects being compared in AHP, such as criteria, sub-criteria, and alternatives.}.
As human judgements are rarely perfectly consistent, various \emph{consistency indices} have been proposed to quantify deviations from perfect consistency.
The assessment of the degree of consistency in PCMs has remained an open and hot topic~\cite{brunelli_2018,mazurek_book}.
The consistency index ($\ci$) and consistency ratio ($\cratio$) introduced by Saaty~\cite{saaty_book,saaty_book2} are widely employed in practice.

The interpretation of the consistency values across PCMs of varying sizes remains a long-standing question~\cite{brunelli_book,pelaez-lamata}.
For example, in Saaty's framework, the thresholds $\ci \leq 0.1$ and $\cratio \leq 0.1$ are applied uniformly regardless of the matrix size $n$.
However, the same numerical value for a particular index does not necessarily represent the degree of inconsistency as $n$ changes.
This raises the question of the properties that a \emph{size-independent} consistency index should possess.

In this study, we address the size-independence problem by analysing the relationship between a PCM of size $n$ (referred to as a \emph{super-PCM}) and $n$ distinct PCMs of size $n-1$ obtained by eliminating one item (referred to as \emph{sub-PCMs}).
The consistency of a super-PCM should be comparable with that of its sub-PCMs.
We formalise this intuition by introducing a property that links the index values of a super-PCM and its sub-PCMs, and discuss its implications for interpreting consistency across different matrix sizes.

Building on this property, we refine an existing index and identify the consistency measure that fulfils the property.
We further examine whether several well-known indices satisfy this property, and conduct numerical experiments with randomly generated PCMs to investigate their behaviour.
The results highlight the differences in how existing indices respond to changes in matrix size and suggest directions for defining or selecting indices that are more robust to such variations.

The main contributions of this study are as follows.
\begin{enumerate}
	\item Introducing the concept of sub-PCM and super-PCM, and analysing the relationship between them.
	\item Proposing a property that characterises size-independence for consistency indices.
	\item Identifying a consistency index that satisfies this property.
	\item Using numerical experiments to illustrate and compare the behaviour of representative indices.
	\item Discussing the potential of sub-PCM analysis.
\end{enumerate}

The remainder of this paper is organised as follows.
Section~\ref{sec:pairwise_comparison} reviews the fundamentals of pairwise comparisons and consistency indices, including their definitions and known properties.
Section~\ref{sec:sub_pcm} investigates the relationship between the consistency of a PCM and its size, introduces the concepts of super-PCMs and sub-PCMs, and examines their connections.
Section~\ref{sec:experiments} presents numerical experiments using random PCMs to visualise and compare the behaviours of several indices with respect to the average-preserving property.
Section~\ref{sec:conclusions} concludes the paper and discusses the directions for future research.

\section{Pairwise Comparison and Consistency}
\label{sec:pairwise_comparison}

\subsection{Pairwise Comparison Matrix}

Suppose there are $n$ items to be evaluated, denoted by $C_1, C_2, \dots, C_n$.
The decision maker performs pairwise comparisons for all possible pairs $(C_i, C_j)$, assessing their relative preferences (importance or superiority) by judging which item is preferable.
As there are $\binom{n}{2} = n(n-1)/2$ pairs, $n(n-1)/2$ comparisons are required.
Unless stated otherwise, we assume $n \geq 3$ throughout this paper.

Pairwise comparisons are performed using the verbal expressions listed in Table~\ref{table:saaty_scale}.
Saaty interpreted these expressions as reflecting the ratio of preference between two items in the decision maker's mind, and proposed assigning them numerical values from $1$ to $9$ and their reciprocals.
This numerical representation is referred to as Saaty's scale.
\begin{table}
	\centering
	\caption{Verbal expressions used in Saaty's scale}
	\label{table:saaty_scale}
	\begin{tabular}{lcc}
		\hline
		\multicolumn{1}{c}{Verbal expression}          & Value of $a_{ij}$ & Value of $a_{ji}$ \\
		\hline
		$C_j$ is absolutely more preferable than $C_i$ & $1/9$             & $9$ \\
		$C_j$ is moderately more preferable than $C_i$ & $1/7$             & $7$ \\
		$C_j$ is strongly more preferable than $C_i$   & $1/5$             & $5$ \\
		$C_j$ is weakly more preferable than $C_i$     & $1/3$             & $3$ \\
		$C_i$ is equally preferable to $C_j$           & $1$               & $1$ \\
		$C_i$ is weakly more preferable than $C_j$     & $3$               & $1/3$ \\
		$C_i$ is strongly more preferable than $C_j$   & $5$               & $1/5$ \\
		$C_i$ is moderately more preferable than $C_j$ & $7$               & $1/7$ \\
		$C_i$ is absolutely more preferable than $C_j$ & $9$               & $1/9$ \\
		\hline
		\multicolumn{3}{l}{The intermediate values $2, 4, 6, 8$ and their reciprocals may also be used.}
	\end{tabular}
\end{table}

As a result of the comparison between $C_i$ and $C_j$, a numerical value $a_{ij}$ representing the relative preference of $C_i$ over $C_j$ is assigned based on Table~\ref{table:saaty_scale}.
Its reciprocal $a_{ji} = 1/a_{ij}$, which represents the relative preference of $C_j$ over $C_i$, is simultaneously determined.
When $C_i$ is compared with itself, it is considered equally preferable; therefore, $a_{ii}$ should be set to $1$.
By arranging all these values $a_{ij}$, we obtain an $n \times n$ square matrix $A = (a_{ij})$.

\begin{definition}
	An $n \times n$ square matrix $A = (a_{ij})$ is called a \emph{pairwise comparison matrix} (\emph{PCM}) or a \emph{positive reciprocal matrix} if, for all $i,j \in \{1,2,\dots,n\}$ it satisfies
	\begin{align*}
		 & a_{ij} > 0,
		\\
		 & a_{ij} = \frac{1}{a_{ji}}.
	\end{align*}
\end{definition}

Consequently, all the diagonal elements of the PCM are equal to $1$.
\begin{equation}
	A =
	\begin{pmatrix}
		1        & a_{12}   & \cdots{} & a_{1n}   \\
		1/a_{12} & 1        & \cdots{} & a_{2n}   \\
		\vdots{} & \vdots{} & \ddots{} & \vdots{} \\
		1/a_{1n} & 1/a_{2n} & \cdots{} & 1
	\end{pmatrix}
	.
	\label{eq:pcm_n}
\end{equation}

Note that, in this study, a PCM is not restricted to matrices whose entries are limited to Saaty's scale, that is, $1, 2, \dots, 9$ and their reciprocals.

\subsection{Consistency and Saaty's Consistency Indices}

In AHP, the $(i,j)$ entry $a_{ij}$ of a pairwise comparison matrix is interpreted as representing the preference ratio between $C_i$ and $C_j$.
Let $w_1, w_2, \dots, w_n$ denote the true preference values (hereafter referred to as \emph{weights}) of items $C_1, C_2, \dots, C_n$.
Then, if the decision maker's judgements are perfectly correct, it holds that
\begin{equation}
	a_{ij} = \frac{w_i}{w_j}, \quad \text{for all $i,j \in \{1,2,\dots,n\}$}.
	\label{eq:consistent1}
\end{equation}

\begin{definition}
	PCM $A = (a_{ij})$ is said to be \emph{consistent} if
	\begin{equation}
		a_{ij} \, a_{jk} = a_{ik}, \quad \text{for all $i,j,k \in \{1,2,\dots,n\}$}.
		\label{eq:consistent2}
	\end{equation}
\end{definition}

It can be verified that condition~\eqref{eq:consistent2} is equivalent to the existence of weights $w_1, w_2, \dots, w_n$ such that Equation~\eqref{eq:consistent1} holds.

In the above condition~\eqref{eq:consistent2}, the range of the indices can be restricted.
\begin{proposition}
	PCM $A = (a_{ij})$ is consistent if and only if
	\begin{equation}
		a_{ij} a_{jk} = a_{ik}, \quad \text{for all $i,j,k \in \{1,2,\dots,n\}$ with $i<j<k$}.
		\label{eq:consistent3}
	\end{equation}
\end{proposition}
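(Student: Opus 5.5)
The ``only if'' direction is immediate, since condition~\eqref{eq:consistent3} is a special case of condition~\eqref{eq:consistent2}. For the ``if'' direction, we assume~\eqref{eq:consistent3} and verify $a_{ij}a_{jk}=a_{ik}$ for every ordered triple $(i,j,k)$. The plan is a case analysis on whether indices coincide and, if not, on how $i,j,k$ are ordered.

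First we handle triples with repeated indices, using only reciprocity and $a_{ii}=1$. If $i=j$, then $a_{ii}a_{ik}=a_{ik}$. If $j=k$, then $a_{ij}a_{jj}=a_{ij}=a_{ik}$. If $i=k$, then $a_{ij}a_{ji}=1=a_{ii}$.

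Now let $i,j,k$ be distinct, and write their values in increasing order as $p<q<r$. The hypothesis gives the single identity $a_{pq}a_{qr}=a_{pr}$. Using $a_{ji}=1/a_{ij}$, this identity can be rearranged into each of the six orderings:
\begin{align*}
(p,q,r)&:\ a_{pq}a_{qr}=a_{pr} \quad\text{(the hypothesis itself)},\\
(p,r,q)&:\ a_{pr}a_{rq}=a_{pr}/a_{qr}=a_{pq},\\
(q,p,r)&:\ a_{qp}a_{pr}=a_{pr}/a_{pq}=a_{qr},\\
(q,r,p)&:\ a_{qr}a_{rp}=a_{qr}/a_{pr}=1/a_{pq}=a_{qp},\\
(r,p,q)&:\ a_{rp}a_{pq}=a_{pq}/a_{pr}=1/a_{qr}=a_{rq},\\
(r,q,p)&:\ a_{rq}a_{qp}=1/(a_{qr}a_{pq})=1/a_{pr}=a_{rp}.
\end{align*}
Each line follows by dividing or inverting the base identity.

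No step here is a real obstacle. The only care needed is to make sure all six permutations, together with the degenerate cases, are covered. An alternative route is to set $w_i=a_{i1}$ and derive $a_{ij}=w_i/w_j$, but the permutation check is more direct.
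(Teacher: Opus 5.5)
Your proof is correct and takes the same approach as the paper: reciprocity rewrites each ordering of three distinct indices into the increasing-order hypothesis. The paper writes out only the case $k<j<i$ (your $(r,q,p)$ line) and leaves the rest as ``similar,'' so your version is more complete; it also handles the repeated-index triples, which the original consistency condition covers but the paper does not mention.
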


\begin{proof}
	Consider the case $k < j < i$.
	Then
	\begin{align*}
		a_{ij} a_{jk}
		 & = \frac{1}{a_{ji}} \cdot \frac{1}{a_{kj}}
		= \frac{1}{a_{ki}}
		= a_{ik}.
	\end{align*}
	The remaining cases are presented similarly.
	Hence, if \eqref{eq:consistent3} holds true, then \eqref{eq:consistent2} holds.
\end{proof}

Incidentally, when the values $a_{ij}$ are restricted to discrete levels of Saaty's scale, a completely consistent PCM is rarely obtained.
In particular, when $n = 4$, only 343 of all $17^6 = 24{,}137{,}569$ possible matrices satisfy the condition~\cite{obata_shiraishi_BIC53-3}.
Nevertheless, if the judgements are reasonably accurate, the resulting matrix is expected to be nearly consistent.

Note that pairwise comparisons are independently elicited; hence, global consistency is not ensured.
For three items, $C_i$, $C_j$, and $C_k$, if $C_j$ is judged to be more preferable than $C_k$ and $C_i$ is preferable to $C_j$, then the transitivity would require $C_i$ to be more preferable than $C_k$, as shown in Fig.~\ref{fig:judgements}(a).
In practice, slips of attention, ambiguity, or judgement noise may yield the opposite verdict.
For example, $C_k$ is judged to be more preferable than $C_i$ as shown in Fig.~\ref{fig:judgements}(b), creating a cycle and breaking the consistency.
Such inconsistencies cannot be ruled out a priori, as they are an inherent risk of independent elicitation and motivate the need for quantitative indices to assess inconsistency.

\begin{figure}
	\centering
	\hfill
	\begin{minipage}{0.45\linewidth}
		\centering
		\includegraphics[scale=1]{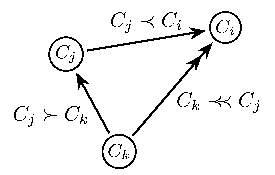}
		\\
		{\small (a) High consistency}
	\end{minipage}
	\hfill
	\begin{minipage}{0.45\linewidth}
		\centering
		\includegraphics[scale=1]{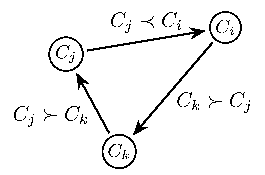}
		\\
		{\small (b) Low consistency}
	\end{minipage}
	\hfill
	\caption{Consistent and inconsistent judgements}
	\label{fig:judgements}
\end{figure}

Therefore, various measures, known as \emph{consistency indices}\footnote{As the number of consistency indices increases as judgements become less consistent, some argue that they are more appropriately termed \emph{inconsistency indices}.}, have been proposed to quantify the degree of consistency by measuring the deviation from a perfectly consistent state~\eqref{eq:consistent2}.

Saaty demonstrated an important property relating the principal eigenvalue $\lambdamax$ of a PCM to its consistency and proposed a consistency index based on this property~\cite{saaty_book}.

\begin{theorem}[Saaty, 1980]
	\label{thm:lambdamax}
	Let $A$ be a PCM of size $n$ and $\lambdamax$ denotes its principal eigenvalue.
	Thus, the following holds:
	\begin{gather*}
		\text{$A$ is consistent} \iff \lambdamax = n;
		\\
		\lambdamax \geq n.
	\end{gather*}
\end{theorem}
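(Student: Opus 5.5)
The plan is the classical Perron--Frobenius argument. Since $A$ has strictly positive entries, the Perron--Frobenius theorem gives a real eigenvalue $\lambdamax>0$ with a strictly positive eigenvector $w=(w_1,\dots,w_n)^\top$, and we may use $Aw=\lambdamax w$ componentwise:
\begin{equation*}
	\lambdamax = \sum_{j=1}^n a_{ij}\frac{w_j}{w_i}, \qquad i=1,\dots,n.
\end{equation*}
Summing over $i$ gives $n\lambdamax=\sum_{i,j} a_{ij}w_j/w_i$. The diagonal terms contribute $n$. Each unordered pair $i<j$ contributes $x_{ij}+1/x_{ij}$, where $x_{ij}=a_{ij}w_j/w_i>0$; this uses the reciprocity $a_{ji}=1/a_{ij}$. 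Hence
\begin{equation*}
	n\lambdamax = n + \sum_{i<j}\Bigl(x_{ij}+\frac{1}{x_{ij}}\Bigr).
\end{equation*}

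Next, apply the elementary inequality $x+1/x\ge 2$ for $x>0$, with equality iff $x=1$. This yields $n\lambdamax\ge n+2\binom{n}{2}=n^2$, so $\lambdamax\ge n$. This is the second claim.

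For the equivalence, suppose $\lambdamax=n$. Then every inequality above must be tight, so $x_{ij}=1$, that is, $a_{ij}=w_i/w_j$ for all $i<j$, and by reciprocity for all $i,j$. Then $a_{ij}a_{jk}=(w_i/w_j)(w_j/w_k)=a_{ik}$, so $A$ is consistent by \eqref{eq:consistent2}. Conversely, if $A$ is consistent, then, as remarked after \eqref{eq:consistent2}, there are weights with $a_{ij}=w_i/w_j$. Hence $A=w\,(1/w_1,\dots,1/w_n)$ is a rank-one matrix with $Aw=nw$. Its only nonzero eigenvalue is $\operatorname{tr}A=n$, so $\lambdamax=n$.

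The only delicate step is justifying the positive eigenvector and the identification of $\lambdamax$ with the Perron root; both are exactly what Perron--Frobenius provides for positive matrices, and this theorem is simply cited. Everything else is the AM--GM pairing, so I expect no real obstacle beyond making the reciprocity pairing of the off-diagonal terms explicit.
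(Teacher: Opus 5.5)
Your proof is correct and complete. The paper gives no proof of this theorem; it only cites Saaty's book. Your argument is essentially Saaty's original one: the positive Perron vector, pairing each $a_{ij}w_j/w_i$ with its reciprocal via $a_{ji}=1/a_{ij}$, the bound $x+1/x\ge 2$ with its equality case, and the rank-one computation for the converse. As a by-product, it also yields $\ci(A)=\frac{1}{n(n-1)}\sum_{i<j}\bigl(x_{ij}+x_{ij}^{-1}-2\bigr)$, an average over pairs that parallels the triple-sum formula for $\cthree$ used later in the paper.
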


\begin{definition}[Saaty's $\boldsymbol{\ci}$]
	For a PCM $A$ of size $n$, the consistency index $\ci$ of $A$ is defined as
	\begin{equation*}
		\ci(A) = \frac{\lambdamax - n}{n - 1}.
	\end{equation*}
\end{definition}

From Theorem~\ref{thm:lambdamax}, the following properties are obtained.
\begin{gather*}
	\text{$A$ is consistent} \iff \ci(A) = 0;
	\\
	\ci(A) \geq 0.
\end{gather*}

As $\ci$ is non-negative and tends to be closer to zero when judgements approach perfect consistency, Saaty regarded it as an index for measuring the degree of consistency in judgements.
Saaty proposed that the larger the value of $\ci$---that is, the farther it lies from zero---the more inconsistent the judgements should be considered.

Hereafter, any function on PCMs that satisfies the same properties is referred to as a consistency index.
\begin{definition}
	A function $c$ defined on the set of PCMs is called a \emph{consistency index} if it satisfies
	\begin{gather*}
		c(A) \geq 0, \quad \text{for any PCM $A$; and}
		\\
		\text{$A$ is consistent} \iff c(A) = 0.
	\end{gather*}
\end{definition}

Although inconsistent judgements cannot be ruled out a priori, Saaty recommended evaluating the value of $\ci$ after completing all pairwise comparisons.
A large value indicates the presence of inconsistent judgements, in which case the pairwise comparisons should be repeated.
Furthermore, he proposed $\ci \leq 0.1$ as a guideline for the acceptable level, although this threshold had no theoretical basis.

In particular, one may question whether the same threshold value of $0.1$ is appropriate irrespective of the matrix size~\cite{brunelli_book,pelaez-lamata}.
Therefore, Saaty generated random PCMs of each size and normalised their $\ci$ values by employing the average $\ci$ value over these random matrices~\cite{saaty_book}.

\begin{definition}[Saaty's $\boldsymbol{\cratio}$]
	For a PCM $A$ of size $n$, the consistency ratio $\cratio$ of $A$ is defined as:
	\begin{equation*}
		\cratio(A) = \frac{\ci(A)}{\randindex(n)},
	\end{equation*}
	where $\randindex$ listed in Table~\ref{table:rand_index}\footnote{%
		In Saaty's books~\cite{saaty_book,saaty_book2}, the values of $\randindex$ for $n = 12$ appear to include typographical errors.
		The values denoted here were obtained from Tone~\cite{tone_book}.} denotes the mean $\ci$-value of the randomly generated PCMs based on Saaty's scale.
\end{definition}
\begin{table}
	\centering
	\caption{Random indices}
	\label{table:rand_index}
	\begin{tabular}{c|ccccccccccccc}
		\hline
		$n$          & $3$    & $4$    & $5$    & $6$    & $7$    & $8$    & $9$    & $10$   & $11$   & $12$   & $13$   & $14$   & $15$ \\
		\hline
		$\randindex$ & $0.58$ & $0.90$ & $1.12$ & $1.24$ & $1.32$ & $1.41$ & $1.45$ & $1.49$ & $1.51$ & $1.53$ & $1.56$ & $1.57$ & $1.59$ \\
		\hline
	\end{tabular}
\end{table}

Note that $\cratio$ also satisfies the properties of a consistency index.

Saaty updated his guideline, replacing the criterion $\ci \leq 0.1$ with $\cratio \leq 0.1$ as the threshold for acceptable consistency.
However, without a clear definition of how to compare the degree of consistency across different matrix sizes, it remains unclear whether this guideline is meaningful.

\subsection{Characteristic Polynomial of a PCM and Consistency}

The consistency of a PCM is closely related to its characteristic polynomial.

Let $P_A(\lambda) = \det(\lambda E - A)$ denote the characteristic polynomial of $n \times n$ PCM~\eqref{eq:pcm_n}:
\begin{equation*}
	P_A(\lambda) = c_0 \lambda^n + c_1 \lambda^{n-1} + c_2 \lambda^{n-2} + c_3 \lambda^{n-3} + \dots + c_n.
\end{equation*}
It can be verified that the coefficients satisfy
\begin{equation*}
	c_0 = 1, \qquad c_1 = -n, \qquad c_2 = 0, \qquad c_n = (-1)^n \det A,
\end{equation*}
and hence
\begin{equation}
	P_A(\lambda) = \lambda^n - n \lambda^{n-1} + c_3 \lambda^{n-3} + \dots + (-1)^n \det A.
	\label{eq:char_poly}
\end{equation}

Furthermore, for coefficient $\cthree$ of degree $n-3$, Shiraishi et al.~\cite{shira_obata_daigo} established the following result:
\begin{theorem}[Shiraishi, Obata, and Daigo, 1998]
	\label{thm:cthree_representation}
	Let $A$ be a PCM of size $n$.
	The coefficient $\cthree$ of the characteristic polynomial~\eqref{eq:char_poly} of $A$ can be expressed as
	\begin{equation}
		\cthree(A) = \sum_{i<j<k} \left(2 - \frac{a_{ij}a_{jk}}{a_{ik}} - \frac{a_{ik}}{a_{ij}a_{jk}}\right).
		\label{eq:cthree_representation}
	\end{equation}
\end{theorem}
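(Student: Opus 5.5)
The plan is to read off $\cthree$ from the standard expansion of the characteristic polynomial in terms of principal minors. For any $n\times n$ matrix, the coefficient of $\lambda^{n-m}$ in $\det(\lambda E - A)$ is $(-1)^m$ times the sum of all $m\times m$ principal minors of $A$. Hence
\begin{equation*}
	\cthree(A) = -\sum_{i<j<k} \det \begin{pmatrix} 1 & a_{ij} & a_{ik} \\ a_{ji} & 1 & a_{jk} \\ a_{ki} & a_{kj} & 1 \end{pmatrix}.
\end{equation*}
The proof therefore reduces to evaluating a single $3\times 3$ principal minor of a PCM and summing over all triples $i<j<k$. As a sanity check of the same method, the $m=1$ and $m=2$ cases give $c_1=-\operatorname{tr}A=-n$ and $c_2=\sum_{i<j}(1-a_{ij}a_{ji})=0$, which agree with~\eqref{eq:char_poly}.

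For the $3\times 3$ minor, I will use reciprocity: $a_{ji}=1/a_{ij}$, $a_{ki}=1/a_{ik}$, $a_{kj}=1/a_{jk}$, and write $x=a_{ij}$, $y=a_{jk}$, $z=a_{ik}$. Expanding by the rule of Sarrus:
\begin{itemize}
	\item the main diagonal contributes $1$;
	\item the other two positive products are $a_{ij}a_{jk}a_{ki}=xy/z$ and $a_{ik}a_{ji}a_{kj}=z/(xy)$;
	\item each of the three negative products pairs a diagonal $1$ with a reciprocal pair, so they contribute $-3$ in total.
\end{itemize}
The minor therefore equals $1 + xy/z + z/(xy) - 3 = xy/z + z/(xy) - 2$.

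Negating and summing over $i<j<k$ gives exactly~\eqref{eq:cthree_representation}. I expect no real obstacle. The only point needing care is the sign convention: one must confirm that the coefficient of $\lambda^{n-3}$ in $\det(\lambda E-A)$ carries the factor $(-1)^3$ on the sum of $3\times3$ principal minors. I would verify this either from the general expansion $\det(\lambda E - A)=\sum_m (-1)^m E_m(A)\lambda^{n-m}$, where $E_m$ is the sum of $m\times m$ principal minors, or by the check with $c_1$ above. The result also shows $\cthree\le 0$ by AM–GM, with equality if and only if~\eqref{eq:consistent3} holds, although the statement does not require this.
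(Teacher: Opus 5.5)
Your argument is correct and complete: the expansion $\det(\lambda E - A)=\sum_{m}(-1)^m E_m(A)\lambda^{n-m}$ gives $c_3=-E_3(A)$, and your Sarrus evaluation of each $3\times 3$ principal minor as $\frac{a_{ij}a_{jk}}{a_{ik}}+\frac{a_{ik}}{a_{ij}a_{jk}}-2$ is right. The paper states this theorem without proof and only cites Shiraishi, Obata, and Daigo (1998); your principal-minor computation is the standard derivation of that result.
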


The coefficient $\cthree$ exhibits the following properties:
\begin{theorem}[Shiraishi, Obata, and Daigo, 1998]
	\label{thm:cthree}
	Let $A$ be a PCM of size $n$ and $\cthree(A)$ denote the coefficient of degree $n-3$ in the characteristic polynomial of $A$.
	Then, the following properties hold.
	\begin{gather*}
		\text{$A$ is consistent} \iff \cthree(A) = 0;
		\\
		\cthree(A) \leq 0.
	\end{gather*}
\end{theorem}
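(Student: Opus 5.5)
The plan is to derive both claims directly from the representation~\eqref{eq:cthree_representation} in Theorem~\ref{thm:cthree_representation}. For each triple $i<j<k$ set
\begin{equation*}
	t_{ijk} = \frac{a_{ij}a_{jk}}{a_{ik}} > 0 .
\end{equation*}
The summand for that triple is then $2 - t_{ijk} - 1/t_{ijk}$. Positivity of the entries guarantees $t_{ijk}>0$, so this quantity is well defined.

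The first step is the elementary inequality $t + 1/t \geq 2$ for $t>0$, with equality if and only if $t=1$. It follows from the identity
\begin{equation*}
	2 - t - \frac{1}{t} = -\frac{(t-1)^2}{t} = -\left(\sqrt{t} - \frac{1}{\sqrt{t}}\right)^2 .
\end{equation*}
Every summand of~\eqref{eq:cthree_representation} is therefore non-positive, and summing over all triples gives $\cthree(A) \leq 0$.

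For the equivalence, a sum of non-positive terms vanishes exactly when every term vanishes. So $\cthree(A)=0$ holds if and only if $t_{ijk}=1$ for all $i<j<k$, that is, $a_{ij}a_{jk}=a_{ik}$ for all $i<j<k$. By the earlier Proposition (condition~\eqref{eq:consistent3}), this restricted condition is equivalent to consistency of $A$, which gives both directions at once. For the forward direction one can also argue directly: if $A$ is consistent, \eqref{eq:consistent2} gives $t_{ijk}=1$ for every triple, so each summand is zero.

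There is no real obstacle, since all the heavy lifting sits in Theorem~\ref{thm:cthree_representation}, which we are allowed to assume. The only point that needs care is the reduction from the full condition~\eqref{eq:consistent2} to triples with $i<j<k$. The sum runs only over ordered triples, so that reduction, supplied by the earlier Proposition, is exactly what is needed.
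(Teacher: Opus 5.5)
Your proof is correct and is essentially the argument the paper relies on. The paper cites this result from Shiraishi, Obata, and Daigo without reproducing a proof, but it states the representation~\eqref{eq:cthree_representation} and the reduction of consistency to triples $i<j<k$ in~\eqref{eq:consistent3} just beforehand, so your termwise use of $t + 1/t \geq 2$ (with equality if and only if $t=1$) gives both $\cthree(A) \leq 0$ and the characterisation of $\cthree(A) = 0$.
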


From Theorem~\ref{thm:cthree}, $-\cthree$ satisfies the properties of a consistency index:
\begin{gather*}
	\text{$A$ is consistent} \iff -\cthree(A) = 0;
	\\
	-\cthree(A) \geq 0.
\end{gather*}
Therefore, $-\cthree$ is proposed as the consistency index~\cite{shira_obata_daigo}.

Note that a consistency index similar to $-\cthree$, denoted by $\ciast$, was proposed by Pel\'{a}ez and Lamata~\cite{pelaez-lamata}.

\begin{definition}[Pel\'{a}ez and Lamata, 2003]
	\begin{equation}
		\ciast(A) = \frac{1}{\binom{n}{3}} \sum_{i<j<k} \left(\frac{a_{ij}a_{jk}}{a_{ik}} + \frac{a_{ik}}{a_{ij}a_{jk}} - 2\right).
		\label{eq:ciast}
	\end{equation}
\end{definition}

Here, $\binom{n}{3}$ denotes the number of distinct combinations of the three items from $n$.
\begin{equation*}
	\binom{n}{3} = \frac{n(n-1)(n-2)}{6},
\end{equation*}
which is equal to the number of terms in the summation on the right-hand side of \eqref{eq:ciast}.

Brunelli et al.~\cite{Burunelli_proportionality} pointed out the relationship between $\cthree$ and $\ciast$.

\begin{proposition}[Brunelli, Critch, and Fedrizzi, 2013, Proposition 1]
	Consider a positive reciprocal matrix $A = (a_{ij})$ with $n \geq 3$, the consistency indices $-\cthree$ and $\ciast$ satisfy the following equality:
	\begin{equation}
		-\cthree(A) = \binom{n}{3} \ciast(A).
		\label{eq:c3_and_ciast}
	\end{equation}
\end{proposition}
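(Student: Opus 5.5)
This is a direct consequence of Theorem~\ref{thm:cthree_representation} and the definition~\eqref{eq:ciast}; the plan is to compare the two summations term by term. Theorem~\ref{thm:cthree_representation} writes $\cthree(A)$ as a sum over all triples $i<j<k$. Each summand is
\begin{equation*}
	2 - \frac{a_{ij}a_{jk}}{a_{ik}} - \frac{a_{ik}}{a_{ij}a_{jk}},
\end{equation*}
which is exactly the negative of the corresponding summand $\frac{a_{ij}a_{jk}}{a_{ik}} + \frac{a_{ik}}{a_{ij}a_{jk}} - 2$ in \eqref{eq:ciast}. Multiplying \eqref{eq:cthree_representation} by $-1$ therefore gives
\begin{equation*}
	-\cthree(A) = \sum_{i<j<k} \left(\frac{a_{ij}a_{jk}}{a_{ik}} + \frac{a_{ik}}{a_{ij}a_{jk}} - 2\right).
\end{equation*}
This sum is precisely $\binom{n}{3}$ times the right-hand side of \eqref{eq:ciast}, since $\ciast(A)$ is that same sum divided by $\binom{n}{3}$. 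Hence $-\cthree(A) = \binom{n}{3}\ciast(A)$, which is \eqref{eq:c3_and_ciast}.

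Two points need checking, and neither is a real obstacle. First, both sums run over the same index set, namely all triples $i<j<k$ in $\{1,\dots,n\}$. Second, the normalising constant $\binom{n}{3}$ is nonzero, which holds because $n\ge 3$; this is what lets us clear the denominator in \eqref{eq:ciast}.

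The only substantive ingredient is Theorem~\ref{thm:cthree_representation}, which is assumed. If one wanted a self-contained argument, that theorem could be rederived by expanding $c_3$ as $(-1)^3$ times the sum of the $3\times 3$ principal minors. For each minor one uses $a_{ii}=1$ and reciprocity to get $\det = 2 - x - 1/x$ with $x = a_{ij}a_{jk}/a_{ik}$, so that $c_3$ is the sum of $x + 1/x - 2$ over triples with the sign fixed accordingly. Only this sign bookkeeping would require care, and since the theorem is already stated in the paper it is not needed here.
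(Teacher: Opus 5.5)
Your argument is correct and matches the paper, which cites this identity without a separate proof and treats it as an immediate termwise consequence of Theorem~\ref{thm:cthree_representation} and definition~\eqref{eq:ciast}. One slip in your optional aside: each $3\times 3$ principal minor has determinant $x + 1/x - 2 \geq 0$, not $2 - x - 1/x$, so $\cthree = (-1)^3\sum_{i<j<k}(x + 1/x - 2) = \sum_{i<j<k}(2 - x - 1/x)$, which agrees with Theorem~\ref{thm:cthree_representation}.
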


Hence, these indices are essentially equivalent and differ only in the scaling factor.

\section{Relationship between Consistency and PCM Size}
\label{sec:sub_pcm}

\subsection{Sub-PCM and Super-PCM}

To investigate how the size of a PCM is related to its consistency, we considered a situation in which one item was removed from the set of items under evaluation.

First, suppose that a decision maker conducts pairwise comparisons on $n$ items, $C_1, C_2, \dots, C_n$.
This yields an $n \times n$ PCM of the form
\begin{equation*}
	A =
	\begin{pmatrix}
		1        & a_{12}   & \cdots{} & a_{1n}   \\
		1/a_{12} & 1        & \cdots{} & a_{2n}   \\
		\vdots{} & \vdots{} & \ddots{} & \vdots{} \\
		1/a_{1n} & 1/a_{2n} & \cdots{} & 1
	\end{pmatrix}.
\end{equation*}

To study the effect of removing one item from a PCM, we define the corresponding submatrix as a sub-PCM, and refer to the original matrix as the corresponding super-PCM.

\begin{definition}
	Let $\submat{s}$ denote the $(n-1) \times (n-1)$ square matrix obtained from a PCM $A$ of size $n$ by eliminating its $s$th row and $s$th column, where $1 \leq s \leq n$.
	We call $\submat{s}$ a \emph{sub-pairwise comparison matrix} (\emph{sub-PCM}) of $A$ with respect to $s$.
	\begin{equation*}
		\submat{s} =
		\begin{pmatrix}
			1           & \cdots{} & a_{1,s-1}     & a_{1,s+1}   & \cdots{} & a_{1n}    \\
			\vdots{}    & \ddots{} & \vdots{}      & \vdots{}    & \ddots{} & \vdots{}  \\
			1/a_{1,s-1} & \cdots{} & 1             & a_{s-1,s+1} & \cdots{} & a_{s-1,n} \\
			1/a_{1,s+1} & \cdots{} & 1/a_{s-1,s+1} & 1           & \cdots{} & a_{s+1,n} \\
			\vdots{}    & \ddots{} & \vdots{}      & \vdots{}    & \ddots{} & \vdots{}  \\
			1/a_{1n}    & \cdots{} & 1/a_{s-1,n}   & 1/a_{s+1,n} & \cdots{} & 1
		\end{pmatrix}.
	\end{equation*}

	Conversely, from the perspective of a sub-PCM, the original matrix $A$ is referred to as a \emph{super-pairwise comparison matrix} (\emph{super-PCM}).
\end{definition}

Matrix $\submat{s}$ defined in this manner is a positive reciprocal matrix and should correspond to the PCM that would result if the same decision maker conducted pairwise comparisons on the remaining $n-1$ items $C_1, \dots, C_{s-1}, C_{s+1}, \dots, C_n$.
This perspective enables us to analyse how the consistency of a super-PCM relates to that of its sub-PCMs, which is essential for understanding the influence of the matrix size on consistency indices.

\begin{example}
	Suppose that the five items $C_1, C_2, C_3, C_4, C_5$ are compared in pairs as shown in Fig.~\ref{fig:omit_comparison}(a) and listed in Table~\ref{tbl:comparison5}.
	Then, the following PCM is obtained:
	\begin{equation*}
		A=
		\begin{pmatrix}
			1   & 3 & 1/3 & 1/3 & 1/5 \\
			1/3 & 1 & 1/7 & 1/5 & 1/7 \\
			3   & 7 & 1   & 1   & 5   \\
			3   & 5 & 1   & 1   & 1/3 \\
			5   & 7 & 1/5 & 3   & 1
		\end{pmatrix}
	\end{equation*}
	\begin{table}
		\centering
		\caption{Pairwise comparison results for five items}
		\label{tbl:comparison5}
		\begin{tabular}{c|cccc}
			\hline
			      & $C_2$       & $C_3$           & $C_4$         & $C_5$ \\
			\hline
			$C_1$ & weakly more & weakly less     & weakly less   & strongly less \\
			$C_2$ & ---         & moderately less & strongly less & moderately less \\
			$C_3$ & ---         & ---             & equal         & strongly more \\
			$C_4$ & ---         & ---             & ---           & weakly less \\
			\hline
		\end{tabular}
	\end{table}

	In this case, if $C_3$ is excluded in the initial set of items as shown in Fig.~\ref{fig:omit_comparison}(b), the PCM obtained by the same decision maker would be
	\begin{equation*}
		\left(
		  \begin{array}{cc>{\columncolor[gray]{0.6}}ccc}
			  1                      & 3 & 1/3 & 1/3 & 1/5 \\
			  1/3                    & 1 & 1/7 & 1/5 & 1/7 \\
			  \rowcolor[gray]{0.6} 3 & 7 & 1   & 1   & 5   \\
			  3                      & 5 & 1   & 1   & 1/3 \\
			  5                      & 7 & 1/5 & 3   & 1
		  \end{array}
		\right)
		\quad\rightarrow\quad
		\begin{pmatrix}
			1   & 3 & 1/3 & 1/5 \\
			1/3 & 1 & 1/5 & 1/7 \\
			3   & 5 & 1   & 1/3 \\
			5   & 7 & 3   & 1
		\end{pmatrix}
	\end{equation*}
	which is precisely the sub-PCM $\submat{3}$ of $A$.

	\begin{figure}
		\centering
		\hfill
		\begin{minipage}{0.45\linewidth}
			\centering
			\includegraphics[scale=1]{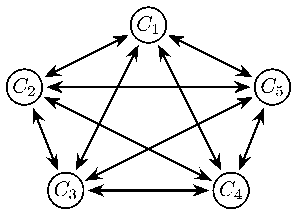}
			\\
			{\small (a) Comparison among five items}
		\end{minipage}
		\hfill
		\begin{minipage}{0.45\linewidth}
			\centering
			\includegraphics[scale=1]{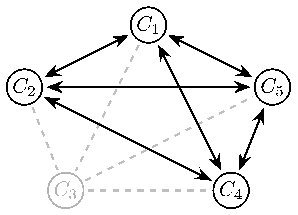}
			\\
			{\small (b) Comparison among remaining items}
		\end{minipage}
		\hfill
		\caption{Omitting one item from pairwise comparisons}
		\label{fig:omit_comparison}
	\end{figure}

\end{example}

\subsection{Desirable Property for Size Independence}

The consistency of a super-PCM is expected to be comparable to that of its sub-PCMs.

\begin{claim}
	The consistency of a super-PCM should be comparable to that of its sub-PCMs.
\end{claim}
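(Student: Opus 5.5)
The claim is normative rather than a theorem, so my plan is to justify it by an invariance argument and then make it precise through a combinatorial identity. The justification runs as follows. A sub-PCM $\submat{s}$ is exactly the matrix the same decision maker would have produced had $C_s$ never been included. The judgements in $A$ are therefore the union of the judgements in its sub-PCMs. Each comparison $a_{ij}$ with $i<j$ appears in precisely $n-2$ of the $n$ sub-PCMs, namely those with $s\notin\{i,j\}$. Hence no judgement is privileged, and the global quality of $A$ should be an unbiased aggregate of the qualities of the $\submat{s}$.

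To make ``comparable'' concrete, I will formulate it as an averaging (average-preserving) property,
\begin{equation*}
	c(A)=\frac{1}{n}\sum_{s=1}^{n} c(\submat{s}),
\end{equation*}
and show that this requirement is natural and attainable. The key step is the triple-counting identity obtained from Theorem~\ref{thm:cthree_representation}. Each term of \eqref{eq:cthree_representation} depends only on the triple $\{i,j,k\}$, and every triple is contained in exactly $n-3$ of the sub-PCMs. Summing therefore gives
\begin{equation*}
	\sum_{s=1}^{n}\cthree(\submat{s})=(n-3)\,\cthree(A).
\end{equation*}
This shows that the raw $-\cthree$ grows with $n$ and so is not comparable across sizes. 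Dividing by the number of triples repairs this: using \eqref{eq:c3_and_ciast} and the identity $\binom{n}{3}(n-3)=n\binom{n-1}{3}$, I will derive that $\ciast(A)$ equals the mean of $\ciast(\submat{s})$. So a size-independent index exists, and the claim is consistent with the paper's existing indices.

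I will also note the immediate consequence of the averaging property. We get $\min_s c(\submat{s})\le c(A)\le \max_s c(\submat{s})$, so if all sub-PCMs are acceptable, then so is $A$, which is the intended reading of ``comparable''. In addition, $A$ is consistent if and only if every $\submat{s}$ is consistent when $n\ge4$. The forward direction is immediate. For the converse I will use Proposition~1 with $i<j<k$: since $n\ge4$, some index $s$ lies outside the triple $\{i,j,k\}$, and that triple is then checked inside $\submat{s}$.

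The main obstacle is that ``comparable'' has no a priori mathematical meaning. Any proof is therefore only a justification of the formalisation, in particular of choosing the arithmetic mean over, say, the maximum or the geometric mean. To address this, I will argue from the symmetry and uniform multiplicity of judgements across sub-PCMs. I will then show that the averaging choice is the one realised exactly by the triple-based index $\ciast$.
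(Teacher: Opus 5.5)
Your plan is correct and follows the paper's route. The paper also treats this as a motivated postulate (a sub-PCM is what the same decision maker would have produced without $C_s$), sharpens it into the average-preserving property, and backs it with your triple-counting identity $\sum_{s=1}^{n}\cthree(\submat{s})=(n-3)\cthree(A)$ and the $\binom{n}{3}$ normalisation showing that $\ciast$ attains it; your argument that, for $n\geq 4$, some index lies outside every triple is a cleaner proof of the paper's proposition that $A$ is consistent if and only if all $\submat{s}$ are.

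One caution: symmetry and uniform multiplicity of judgements only support \emph{some} symmetric aggregate, not the arithmetic mean in particular (a maximum over triads, as in Koczkodaj's index, satisfies $c(A)=\max_s c(\submat{s})$ exactly for $n\geq 4$), so present the averaging choice as a postulate, as the paper does with its revised claim, rather than as something your symmetry argument forces.
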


Therefore, this study requires a consistency index such that the same value represents comparable degrees of consistency regardless of the matrix size $n$.
In particular, we focus on the property that a super-PCM and its sub-PCMs yield comparable index values, and investigate the indices that satisfy this property.

Thus, the following holds:
\begin{proposition}
	A super-PCM is consistent if and only if all sub-PCMs are consistent.
\end{proposition}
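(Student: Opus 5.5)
The plan is to work directly from the triple characterisation of consistency, the restricted form~\eqref{eq:consistent3} (equivalently \eqref{eq:consistent2}). The key observation is that consistency is a condition on triples of distinct indices. Each such triple of $A$ either survives in a sub-PCM or does not, depending on whether $s$ belongs to it.

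For the forward direction, assume $A$ is consistent and fix $s$. Every entry of $\submat{s}$ is an entry $a_{ij}$ of $A$ with $i,j\neq s$, and the reindexing of rows and columns preserves order. Hence every triple $i<j<k$ in $\submat{s}$ corresponds to a triple $i<j<k$ in $\{1,\dots,n\}\setminus\{s\}$. Such a triple satisfies $a_{ij}a_{jk}=a_{ik}$ because $A$ is consistent, so $\submat{s}$ is consistent. (Alternatively, the weights $w_i$, $i\neq s$, witness \eqref{eq:consistent1} for $\submat{s}$.)

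For the converse, assume every $\submat{s}$ is consistent and take any $i<j<k$ in $\{1,\dots,n\}$. Since $n\geq 4$ is needed for the sub-PCMs to be of size at least $3$, there exists $s\notin\{i,j,k\}$. The triple $i,j,k$ then appears, with the same entries, in $\submat{s}$, so the consistency of $\submat{s}$ gives $a_{ij}a_{jk}=a_{ik}$. Because the triple was arbitrary, \eqref{eq:consistent3} holds for $A$, and by the earlier proposition $A$ is consistent.

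The only delicate point is the size assumption. For $n=3$ the sub-PCMs are $2\times 2$ matrices, which are trivially consistent, while $A$ need not be. So the statement requires $n\geq 4$, and the proof will state this explicitly; the existence of an index $s$ outside the triple is exactly where it is used. An alternative route for the converse is Theorem~\ref{thm:cthree_representation}: each triple term of $-\cthree(A)$ occurs in $-\cthree(\submat{s})$ for exactly $n-3$ values of $s$, which gives $\sum_s \cthree(\submat{s})=(n-3)\,\cthree(A)$. Combined with Theorem~\ref{thm:cthree}, this yields both directions at once, again using $n-3>0$. I would present the elementary triple argument and mention this identity as a remark, since it will likely be useful later in the paper.
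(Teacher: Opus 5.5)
Your proof is correct and takes essentially the paper's route. Both reduce consistency to the triple condition \eqref{eq:consistent3} and cover each triple $i<j<k$ by a sub-PCM that omits some index outside it. The paper does this with the fixed choices $A^{(1)}, A^{(n)}, A^{(2)}, A^{(3)}$, whereas you take any $s\notin\{i,j,k\}$. Your $n\geq 4$ caveat is also right, and the paper's proof needs it too without saying so: it uses $A^{(3)}$ for the triple $(1,2,n)$, which requires $n\neq 3$.
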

\begin{proof}
	If a super-PCM is consistent, all its sub-PCMs are consistent.

	Conversely, it is assumed that all sub-PCMs of PCM $A$ of size $n$ are consistent.
	From the consistency of $\submat{1}$, we obtain that
	\[
		a_{ij}a_{jk}=a_{ik}, \quad \text{for all $i,j,k \in \{2,3,\dots,n\}$ with $i<j<k$}.
	\]
	From the consistency of $\submat{n}$, we obtain that
	\[
		a_{1j}a_{jk}=a_{1k}, \quad \text{for all $j,k \in \{2,3,\dots,n-1\}$ with $j<k$}.
	\]
	From the consistencies of $\submat{2}$ and $\submat{3}$, we obtain that
	\[
		a_{1j}a_{jn}=a_{1n}, \quad \text{for all $j \in \{3,\dots,n-1\}$}, \qquad \text{and} \qquad
		a_{12}a_{2n}=a_{1n}.
	\]

	Combining these relations yields
	\[
		a_{ij}a_{jk}=a_{ik}, \quad \text{for all $i,j,k \in \{1,2,\dots,n\}$ with $i<j<k$},
	\]
	This proves that $A$ is consistent.
\end{proof}

Hereafter, unless otherwise specified, we assume that a super-PCM is of size $n \geq 4$ and is inconsistent.

Although we claim that the consistency of a super-PCM is comparable to that of its sub-PCMs, inconsistency may emerge from a single pairwise judgement.
The sub-PCM obtained by excluding the items involved in such a comparison is expected to exhibit higher consistency than when those items are retained.
Because the consistency of the sub-PCMs derived from the same super-PCM can vary, it is unrealistic to assume that the degree of consistency of all the sub-PCMs coincides exactly with that of the super-PCM.

Therefore, we revise the earlier claim as follows:
\begin{claim}
	The average consistency of the sub-PCMs is comparable to that of the super-PCM.
\end{claim}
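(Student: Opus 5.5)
Since the statement is a modelling principle rather than a theorem, my plan is to make it precise and then show that it is not vacuous: it can be stated as an exact identity, and at least one index from Section~\ref{sec:pairwise_comparison} satisfies it. For a consistency index $c$ I would take ``comparable on average'' to mean the \emph{average-preserving property}
\begin{equation*}
	c(A) = \frac{1}{n}\sum_{s=1}^{n} c\bigl(\submat{s}\bigr), \qquad \text{for every PCM $A$ of size $n \geq 4$}.
\end{equation*}
This is compatible with the Proposition above. If $A$ is consistent, both sides vanish. If $A$ is inconsistent, some $\submat{s}$ is inconsistent, so the right-hand side is positive, as is the left.

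The main step is to verify this identity for $\ciast$ using the triple-sum representation of Theorem~\ref{thm:cthree_representation}. Write $t_{ijk} = \frac{a_{ij}a_{jk}}{a_{ik}} + \frac{a_{ik}}{a_{ij}a_{jk}} - 2 \geq 0$, so that $-\cthree(A) = \sum_{i<j<k} t_{ijk}$. The entries of $\submat{s}$ are just the entries of $A$ restricted to the indices other than $s$. Hence $-\cthree(\submat{s})$ is the sum of $t_{ijk}$ over the triples not containing $s$. A fixed triple $\{i,j,k\}$ avoids exactly $n-3$ values of $s$, which gives the double-counting identity
\begin{equation*}
	\sum_{s=1}^{n} \cthree\bigl(\submat{s}\bigr) = (n-3)\,\cthree(A).
\end{equation*}
Combining this with \eqref{eq:c3_and_ciast} yields
\begin{equation*}
	\sum_{s} \ciast\bigl(\submat{s}\bigr) = \frac{(n-3)\binom{n}{3}}{\binom{n-1}{3}}\,\ciast(A) = n\,\ciast(A),
\end{equation*}
because $\binom{n}{3}/\binom{n-1}{3} = n/(n-3)$. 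This is exactly the average-preserving property for $\ciast$.

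Next I would note that the unnormalised index $-\cthree$ does not satisfy the property. By the identity above, its sub-PCM average is $\frac{n-3}{n}$ times its own value. This shows that the normalisation by $\binom{n}{3}$ is precisely what makes the index size-independent. Finally, for $\ci$ and $\cratio$ I expect the property to fail, because $\lambdamax$ has no additive decomposition over triples. I would try to exhibit this with an explicit $4\times4$ counterexample, for instance the matrix $\submat{3}$ from the Example. The main obstacle is this negative part: a closed-form argument for $\lambdamax$ seems hard, so I would rely on a concrete numerical computation of $\lambdamax$ for the matrix and its four sub-PCMs, and defer systematic evidence to the random-matrix experiments.
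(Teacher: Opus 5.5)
Your proposal is correct and follows essentially the paper's route: the paper treats this Claim as a modelling postulate, formalises it as the average-preserving property, and then verifies that property for $\ciast = -\cthree/\binom{n}{3}$. It does so using the same double-counting identity $\sum_{s}\cthree(\submat{s}) = (n-3)\cthree(A)$ (Lemma~\ref{thm:c3_super_sub}) and the same normalisation computation (Theorem~\ref{thm:modifying_c3}). Your extra remarks go beyond what the statement requires: that $-\cthree$ alone gives the factor $(n-3)/n$, and your planned numerical counterexample for $\ci$. The paper addresses $\ci$ only through the eigenvalue bound and the random-matrix experiments.
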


Based on this claim, we propose the following property that a consistency index should satisfy.

\begin{definition}
	A consistency index $c(A)$ is said to have the \emph{average-preserving property} if, for any PCM $A$ of size $n$ $(n \geq 4)$, the following condition holds:
	\begin{equation*}
		c(A) = \frac{1}{n} \sum_{s=1}^{n} c(\submat{s}).
	\end{equation*}
\end{definition}

This property may be regarded as desirable for a consistency index to be size-independent.

\subsection{Consistency Index Satisfying the Average-Preserving Property}

Therefore, we focus on the consistency indices derived from $-\cthree$ that satisfy the average-preserving property.
To this end, we first examine the relationship between the $\cthree$ values of a super-PCM and those of its sub-PCMs.

In the representation of $\cthree$ provided in~\eqref{eq:cthree_representation}, it is convenient to explicitly obtain the range of summation indices $i,j,k$ as follows:
\begin{equation*}
	\sumset = \left\{ (i,j,k) \in \{1,2,\dots,n\}^3 \,\mid \, i < j < k \right\}.
\end{equation*}
For brevity, we introduce the following notation:
\begin{equation*}
	a(i,j,k) = 2 - \frac{a_{ij}a_{jk}}{a_{ik}} - \frac{a_{ik}}{a_{ij}a_{jk}},
	\qquad (i,j,k) \in \sumset,
\end{equation*}
such that $\cthree$ can be expressed concisely as
\begin{equation*}
	\cthree(A) = \sum_{(i,j,k) \in \sumset} a(i,j,k).
\end{equation*}

Similarly, to express the $\cthree$-value of a sub-PCM using the same notation, we obtain
\begin{equation*}
	\cthree(\submat{s}) = \sum_{(i,j,k) \in \sumindex{s}} a(i,j,k),
\end{equation*}
where
\begin{equation*}
	\sumindex{s} = \left\{ (i,j,k) \in \{1,\dots,s-1,s+1,\dots,n\}^3 \,\mid \, i < j < k \right\}.
\end{equation*}

The following relationship holds between the $\cthree$-values of a super-PCM and its sub-PCMs.
The $\cthree$ value of the super-PCM can be considered as being distributed among its sub-PCMs.
The lemma below establishes the precise relationship between these quantities.

\begin{lemma}
	\label{thm:c3_super_sub}
	For a PCM $A$ of size $n$ $(n \geq 4)$, the following equality holds.
	\begin{equation}
		\cthree(\submat{1}) + \cthree(\submat{2}) + \dots + \cthree(\submat{n})
		= (n-3) \cthree(A).
		\label{eq:cmod_sub}
	\end{equation}
\end{lemma}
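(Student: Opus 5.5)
The plan is a double-counting argument built on the triple-sum representation of $\cthree$ from Theorem~\ref{thm:cthree_representation}. First we rewrite each sub-PCM value in terms of the entries of $A$. The sub-PCM $\submat{s}$ keeps the entries $a_{ij}$ of $A$ for all $i,j \neq s$, and its rows and columns stay in the original order. So the triple terms of $\submat{s}$ are exactly the terms $a(i,j,k)$ of $A$ for the triples avoiding $s$. This justifies the formula $\cthree(\submat{s}) = \sum_{(i,j,k) \in \sumindex{s}} a(i,j,k)$ stated above, with the same function $a(\cdot,\cdot,\cdot)$ for every $s$.

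Next, we sum over $s$ and swap the order of summation:
\begin{equation*}
	\sum_{s=1}^{n} \cthree(\submat{s})
	= \sum_{s=1}^{n} \sum_{(i,j,k) \in \sumindex{s}} a(i,j,k)
	= \sum_{(i,j,k) \in \sumset} \bigl|\{ s : s \notin \{i,j,k\} \}\bigr| \, a(i,j,k).
\end{equation*}
This uses $\sumindex{s} = \{(i,j,k) \in \sumset \mid s \notin \{i,j,k\}\}$. For each fixed triple with $i<j<k$, the indices $s$ avoiding it number exactly $n-3$. Hence the right-hand side equals $(n-3)\sum_{(i,j,k)\in\sumset} a(i,j,k) = (n-3)\cthree(A)$, which is \eqref{eq:cmod_sub}.

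The only point needing care is the first step, namely that the triple terms of $\submat{s}$ are literally the terms $a(i,j,k)$ of $A$. Deleting the $s$th row and column relabels the indices, but it preserves their order and the entries. So a triple $i'<j'<k'$ in $\submat{s}$ corresponds to an increasing triple $i<j<k$ in $\{1,\dots,n\}\setminus\{s\}$ with the same entries $a_{ij},a_{jk},a_{ik}$. The hypothesis $n\ge4$ is needed so that each $\submat{s}$ has size at least $3$, which makes Theorem~\ref{thm:cthree_representation} applicable to it. As a sanity check, comparing the number of terms on each side gives $n\binom{n-1}{3} = (n-3)\binom{n}{3}$, which is consistent.
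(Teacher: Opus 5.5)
Your proposal is correct and is essentially the paper's own proof. Both write each $\cthree(\submat{s})$ as the sum of $a(i,j,k)$ over the triples avoiding $s$, and both observe that every triple in $\sumset$ lies in exactly $n-3$ of the sets $\sumindex{s}$. Your extra remarks are correct and spell out details the paper leaves implicit: deleting a row and column preserves index order, $n \geq 4$ makes each sub-PCM at least $3 \times 3$, and the term count $n\binom{n-1}{3} = (n-3)\binom{n}{3}$ matches.
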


\begin{proof}
	For a sub-PCM $\submat{s}$, the index range $\sumindex{s}$ is obtained from $\sumset$ by removing all triples $(i,j,k)$, where one of $i,j,k$ coincides with $s$.

	If we fix an element $(s_1,s_2,s_3) \in \sumset$, then this triple does not belong to $\sumindex{s_1}$, $\sumindex{s_2}$, or $\sumindex{s_3}$ but belongs to each of the remaining $n-3$ sets.
	\begin{equation*}
		\sumindex{s}, \quad s \in \{1,2,\dots,n\}\backslash{}\{s_1, s_2, s_3\}.
	\end{equation*}
	Therefore, in the sum
	\begin{equation*}
		\cthree(\submat{1}) + \cthree(\submat{2}) + \dots + \cthree(\submat{n})
		= \sum_{s=1}^n \sum_{(i,j,k) \in \sumindex{s}} a(i,j,k),
	\end{equation*}
	term $a(s_1,s_2,s_3)$ appears exactly $n-3$ times.
	As this is true for every element of $\sumset$, we obtain
	\begin{equation*}
		\cthree(\submat{1}) + \cthree(\submat{2}) + \dots + \cthree(\submat{n})
		= (n-3) \cthree(A).
	\end{equation*}
\end{proof}

For the case $n = 5$, Table~\ref{tbl:index_set} shows whether each element of $\sumset$ is contained in $\sumindex{1}, \dots, \sumindex{5}$.
Each of $\sumindex{1}, \dots, \sumindex{5}$ consists of four elements, and in total (counting duplicate elements separately) there are $5 \times 4 = 20$ elements.
This number is equal to $(5-3) \times 10 = 20$, where $10$ is the number of elements in $\sumset$.
\begin{table}
	\centering
	\caption{Relationship between the elements of $\sumset$ and $\sumindex{s}$ for $n = 5$}
	\label{tbl:index_set}
	\begin{tabular}{c|ccccc}
		\hline
		          & $\sumindex{1}$ & $\sumindex{2}$ & $\sumindex{3}$ & $\sumindex{4}$ & $\sumindex{5}$ \\
		\hline
		$(1,2,3)$ &                &                &                & $\checkmark$   & $\checkmark$ \\
		$(1,2,4)$ &                &                & $\checkmark$   &                & $\checkmark$ \\
		$(1,2,5)$ &                &                & $\checkmark$   & $\checkmark$   & \\
		$(1,3,4)$ &                & $\checkmark$   &                &                & $\checkmark$ \\
		$(1,3,5)$ &                & $\checkmark$   &                & $\checkmark$   & \\
		$(1,4,5)$ &                & $\checkmark$   & $\checkmark$   &                & \\
		$(2,3,4)$ & $\checkmark$   &                &                &                & $\checkmark$ \\
		$(2,3,5)$ & $\checkmark$   &                &                & $\checkmark$   & \\
		$(2,4,5)$ & $\checkmark$   &                & $\checkmark$   &                & \\
		$(3,4,5)$ & $\checkmark$   & $\checkmark$   &                &                & \\
		\hline
	\end{tabular}
\end{table}

It follows from Lemma~\ref{thm:c3_super_sub} that the following holds:
\begin{theorem}
	\label{thm:modifying_c3}
	For $n \geq 3$, the function defined for PCM $A$ of size $n$ as
	\begin{equation*}
		c(A) = -\frac{6}{n(n-1)(n-2)}\cthree(A)
	\end{equation*}
	is a consistency index and satisfies the average-preserving property.
\end{theorem}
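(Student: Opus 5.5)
The statement has two parts, and I would treat them separately. First, that $c$ is a consistency index. For $n \geq 3$ the factor $6/(n(n-1)(n-2)) = 1/\binom{n}{3}$ is a positive constant depending only on $n$. Theorem~\ref{thm:cthree} gives $\cthree(A) \leq 0$, with equality exactly when $A$ is consistent. Hence $c(A) = -\cthree(A)/\binom{n}{3} \geq 0$, and $c(A)=0$ if and only if $A$ is consistent. By \eqref{eq:c3_and_ciast}, this $c$ is exactly $\ciast$, which gives a useful sanity check.

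Second, the average-preserving property, which only concerns $n \geq 4$. Each sub-PCM $\submat{s}$ has size $n-1$, so by definition
\[
c(\submat{s}) = -\frac{6}{(n-1)(n-2)(n-3)}\,\cthree(\submat{s}).
\]
Summing over $s$ and applying Lemma~\ref{thm:c3_super_sub}, the sum $\sum_s \cthree(\submat{s})$ becomes $(n-3)\cthree(A)$. This gives
\[
\frac{1}{n}\sum_{s=1}^n c(\submat{s}) = -\frac{6(n-3)}{n(n-1)(n-2)(n-3)}\,\cthree(A) = -\frac{6}{n(n-1)(n-2)}\,\cthree(A) = c(A).
\]
The cancellation of $n-3$ is legitimate because $n \geq 4$.

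There is no real obstacle: the combinatorial content sits entirely in Lemma~\ref{thm:c3_super_sub}. The only points needing care are to use the normalising constant for size $n-1$ on the sub-PCMs, not size $n$, and to note that the index property holds already for $n=3$ while averaging is only asserted for $n\ge4$.
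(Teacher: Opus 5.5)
Your proposal is correct and follows essentially the same route as the paper. The paper treats the index property as clear, where you derive it explicitly from Theorem~\ref{thm:cthree}. The average-preserving property comes from Lemma~\ref{thm:c3_super_sub} plus cancellation of the size-$(n-1)$ normalising constant; the paper writes this as $p(n)=6/(n(n-1)(n-2))$ and checks $(n-3)\,p(n-1)/(n\,p(n))=1$, which is your computation.
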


\begin{proof}
	As $c(A)$ is clearly a consistency index, it remains to be demonstrated that it satisfies the average-preserving property.

	Define
	\begin{equation*}
		p(n) = \frac{6}{n(n-1)(n-2)}.
	\end{equation*}
	Then, the index can be expressed as
	\begin{equation*}
		c(A) = -p(n)\cthree(A).
	\end{equation*}

	From Equation~\eqref{eq:cmod_sub} of Lemma~\ref{thm:c3_super_sub}, we obtain
	\begin{align*}
		\frac{1}{n} \sum_{s=1}^{n} c(\submat{s})
		 & = -\frac{p(n-1)}{n} \sum_{s=1}^{n} \cthree(\submat{s}) \\
		 & = -\frac{(n-3)\,p(n-1)}{n} \cthree(A)                  \\
		 & = \frac{(n-3)\,p(n-1)}{n\,p(n)}\,c(A).
	\end{align*}
	Finally, substituting $p(n)$ and $p(n-1)$ into this expression yields
	\begin{align*}
		\frac{1}{n} \sum_{s=1}^{n} c(\submat{s})
		 & = \frac{(n-3)\,p(n-1)}{n\,p(n)}\,c(A)                                             \\
		 & = \frac{n-3}{n} \cdot \frac{6}{(n-1)(n-2)(n-3)} \cdot \frac{n(n-1)(n-2)}{6}\,c(A) \\
		 & = c(A).
	\end{align*}
	This confirms the average-preserving property and completes the proof.
\end{proof}

From Equation~\eqref{eq:c3_and_ciast}, the consistency index is \emph{exactly equivalent} to $\ciast$.
This is particularly important because it demonstrates that $\ciast$ is a consistency index that satisfies the desirable property of being \emph{independent} of the matrix size.

\subsection{Saaty's $\boldsymbol{\ci}$ and Average-Preserving Property}
\label{sec:ci_and_property}

While $\ci$ does not share the strong property held by $\ciast$, it still offers the feature that a theoretical upper bound can be established between a super-PCM and its sub-PCMs.

Let $\rho(A)$ denote the principal eigenvalue of a square matrix $A$.
The consistency index $\ci$ proposed by Saaty is expressed as
\begin{equation*}
	\ci(A) = \frac{\rho(A) - n}{n-1}.
\end{equation*}
According to this definition, the relationship between $\ci$ and $n$ is strongly influenced by the relationship between the principal eigenvalue $\rho(A)$ and $n$.

Horn and Johnson~\cite{horn_book} provided a detailed account of the relationship between the eigenvalues of a matrix and its principal submatrices.
To state their results, we first recall the following definitions:
\begin{definition}[Horn and Johnson, 2012]
	Let $A$ be an $n \times n$ square matrix and $\alpha \subset \{1,2,\dots,n\}$.
	The matrix obtained by deleting all rows and columns whose indices belong to $\alpha$ from $A$ is denoted by $A[\alpha]$ and is called the \emph{principal submatrix} of $A$.
\end{definition}

Any sub-PCM is the principal submatrix of a super-PCM.

In particular, the following result by Horn and Johnson provides useful properties of the principal eigenvalue in relation to the principal submatrices:

\begin{theorem}[Horn and Johnson, 2012, Corollary 8.1.20]
	\label{thm:horn-johnson}
	Let $A=(a_{ij})$ be a non-negative square matrix.
	Thus, the following statements hold:
	\begin{enumerate}
		\item \label{item:submatrix}
		      If $\tilde{A}$ is a principal submatrix of $A$, then $\rho(\tilde{A}) \leq \rho(A)$.
		\item $\max_{i=1,2,\dots,n} a_{ii} \leq \rho(A)$.
		\item If all diagonal entries of $A$ are positive, then $\rho(A) > 0$.
	\end{enumerate}
\end{theorem}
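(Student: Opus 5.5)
The statement is the Perron--Frobenius comparison theorem of Horn and Johnson, and I would prove it from the definition of the spectral radius. For item~1 the main tool is monotonicity of $\rho$ under entrywise domination. For non-negative matrices $0 \le B \le C$ entrywise, we have $0 \le B^m \le C^m$ for all $m$. Hence $\|B^m\| \le \|C^m\|$ for any monotone norm, such as the maximum row sum norm. Gelfand's formula $\rho(X) = \lim_{m\to\infty} \|X^m\|^{1/m}$ then gives $\rho(B) \le \rho(C)$.

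To apply this to a principal submatrix $\tilde{A} = A[\alpha]$, I would define the $n \times n$ matrix $\hat{A}$ obtained from $A$ by replacing every entry in a row or column indexed by $\alpha$ with zero. Then $0 \le \hat{A} \le A$ entrywise. After a simultaneous permutation of rows and columns, $\hat{A}$ is the direct sum of $\tilde{A}$ and a zero block. Its eigenvalues are therefore those of $\tilde{A}$ together with zeros, so $\rho(\hat{A}) = \rho(\tilde{A})$, and the monotonicity step yields $\rho(\tilde{A}) \le \rho(A)$.

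Item~2 follows from item~1: a $1\times 1$ principal submatrix $(a_{ii})$ has spectral radius $a_{ii}$, so $a_{ii} \le \rho(A)$ for every $i$. Item~3 follows from item~2, since if every diagonal entry is positive then $\rho(A) \ge \max_i a_{ii} > 0$.

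The only nontrivial ingredient is the monotonicity $\rho(B) \le \rho(C)$ for $0 \le B \le C$. If we do not want to invoke Gelfand's formula, we can argue with norms directly. The row-sum norm is monotone on non-negative matrices, so $\rho(B)^m = \rho(B^m) \le \|B^m\|$. If $\rho(B) > \rho(C)$, choose $r$ with $\rho(C) < r < \rho(B)$. Since $\rho(C/r) < 1$, the powers $(C/r)^m$ tend to $0$, so $\|C^m\| \le K r^m$ for some constant $K$. Combining this with $\|B^m\| \le \|C^m\|$ gives $\rho(B)^m \le K r^m$ for all $m$, which is impossible because $\rho(B) > r$. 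This convergence step is the one place where some care is needed, and I would cite it as a standard fact.
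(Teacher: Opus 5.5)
Your proof is correct: the monotonicity $\rho(B)\le\rho(C)$ for $0\le B\le C$ (via a monotone norm and Gelfand's formula, or via your direct norm argument), the zero-padding of $\tilde{A}$ to an $n\times n$ matrix dominated by $A$ with the same spectral radius, and the specialization to $1\times 1$ principal submatrices for items 2 and 3 are all sound. The paper gives no proof of its own and simply cites Horn and Johnson, and your argument is essentially the standard derivation given there.
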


Let $A$ be a PCM of size $n$.
As all its entries are positive, $A$ satisfies the assumptions of Horn and Johnson.
Accordingly, we obtain the following bounds relating the consistency index of a super-PCM to that of its sub-PCMs:

\begin{theorem}
	Let $A$ be a PCM of size $n$ $(n \geq 4)$ and $\submat{s}$ be a sub-PCM for some $s \in \{1,2,\dots,n\}$.
	Then, the following inequalities hold.
	\begin{align*}
		 & \ci(\submat{s}) \leq \frac{(n-1)\,\ci(A) + 1}{n-2},
		\\
		 & \ci(A) \geq \frac{(n-2)\,\ci(\submat{s})}{n-1} - 1.
	\end{align*}
\end{theorem}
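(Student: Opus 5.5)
The plan is to apply item~\ref{item:submatrix} of Theorem~\ref{thm:horn-johnson} with $\tilde{A} = \submat{s}$, and then rewrite the resulting eigenvalue inequality in terms of $\ci$ on both sides.

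First, $A$ is entrywise positive, so it is non-negative, and $\submat{s}$ is a principal submatrix of $A$. Theorem~\ref{thm:horn-johnson} therefore gives $\rho(\submat{s}) \leq \rho(A)$.

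Next, express both principal eigenvalues through the definition of $\ci$, keeping track of the sizes. For $A$ of size $n$ we have $\rho(A) = (n-1)\ci(A) + n$. For $\submat{s}$ of size $n-1$ we have $\rho(\submat{s}) = (n-2)\ci(\submat{s}) + (n-1)$. The sub-PCM is a PCM of size $n-1 \geq 3$, so $\ci(\submat{s})$ is defined in the usual way. Substituting into $\rho(\submat{s}) \leq \rho(A)$ gives
\begin{equation*}
	(n-2)\,\ci(\submat{s}) + n - 1 \leq (n-1)\,\ci(A) + n,
\end{equation*}
that is, $(n-2)\,\ci(\submat{s}) \leq (n-1)\,\ci(A) + 1$.

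The two stated inequalities are two rearrangements of this single inequality. Dividing by $n-2>0$ gives the first bound, $\ci(\submat{s}) \leq \bigl((n-1)\ci(A)+1\bigr)/(n-2)$. Dividing instead by $n-1>0$ gives $\ci(A) \geq (n-2)\ci(\submat{s})/(n-1) - 1/(n-1)$. Since $n \geq 4$, we have $1/(n-1) \leq 1$, so this is at least $(n-2)\ci(\submat{s})/(n-1) - 1$, which is the second bound as stated. It is slightly weaker than what we derive, which is harmless.

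No step is a genuine obstacle. The only point needing care is using the correct size ($n-1$, not $n$) in the $\ci$ formula for the sub-PCM, together with the positivity of the divisors $n-2$ and $n-1$.
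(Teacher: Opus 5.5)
Your proof is correct and follows essentially the same route as the paper: apply statement~\ref{item:submatrix} of Theorem~\ref{thm:horn-johnson} to get $\rho(\submat{s}) \leq \rho(A)$, rewrite both eigenvalues through the definition of $\ci$ (using sizes $n$ and $n-1$), and rearrange. Your handling of the second inequality is more careful than the paper's remark that it follows ``by solving for $\ci(A)$'', which literally gives the sharper bound $\ci(A) \geq \bigl((n-2)\,\ci(\submat{s}) - 1\bigr)/(n-1)$; you correctly note that the stated form with $-1$ follows from this because $1/(n-1) \leq 1$.
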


\begin{proof}
	From statement~\ref{item:submatrix} of Theorem~\ref{thm:horn-johnson}, we obtain
	\begin{equation}
		\rho(A) \geq \rho(\submat{s}), \quad \text{for all $s \in \{1,2,\dots,n\}.$}
		\label{eq:eigen_rel}
	\end{equation}

	According to the definition of $\ci$,
	\begin{align*}
		 & \rho(A) = (n-1)\ci(A) + n,
		\\
		 & \rho(\submat{s}) = (n-2)\ci(\submat{s}) + (n - 1).
	\end{align*}
	Substituting these into \eqref{eq:eigen_rel} yields
	\begin{equation*}
		(n-1)\ci(A) + n \geq (n-2)\ci(\submat{s}) + (n - 1).
	\end{equation*}
	Rearranging yields
	\begin{equation*}
		\ci(\submat{s}) \leq \frac{(n-1)\ci(A)+1}{n-2}.
	\end{equation*}
	This proves the first point of the inequality.

	The second inequality in the theorem follows by solving the same relationship for $\ci(A)$.
\end{proof}

\section{Numerical Experiments Using Random PCMs}
\label{sec:experiments}

To investigate how the index values of the super-PCMs relate to those of their sub-PCMs, we conducted numerical experiments with randomly generated PCMs under the following settings.

\begin{itemize}
	\item \textbf{Matrix sizes:} $n = 4, 5, 6, 7$.
	\item \textbf{Number of samples:} For each size $n$, we generated 3,000 random PCMs based on the following procedure:
	      \begin{enumerate}
		      \item Generate $\binom{n}{2} = n(n-1)/2$ random numbers, each uniformly drawn from Saaty's 17-point scale ($1/9, 1/8, \dots, 1/2, 1, 2, \dots, 8, 9$).
		      \item Assign these numbers sequentially to the upper triangular entries of the PCM.
		      \item Fill each lower triangular entry with the reciprocal of its symmetric upper entry and set all diagonal entries to $1$.
	      \end{enumerate}
	\item \textbf{Indices computed:} For each generated PCM of size $n$ (super-PCM) and its $n$ sub-PCMs, we computed three consistency indices: $\ciast = -\frac{1}{\binom{n}{3}}\cthree$, $\ci$, and $\cratio$.
\end{itemize}

The results are shown in Fig.~\ref{fig:super_vs_sub}.
The horizontal axis shows the index values of the super-PCMs, and the vertical axis shows those of the sub-PCMs.
The grey dots represent the index values of the super-PCMs (horizontal) versus those of the sub-PCMs (vertical).
The black dots represent the index values of the super-PCMs (horizontal) versus \emph{average} index values of the sub-PCMs (vertical).
For reference, a line with slope $1$ passing through the origin is drawn.
If the average-preserving property is satisfied, the black dots are expected to lie along this line.

\begin{figure}
	\centering
	\begin{tabular}{cccc}
		 & $\ciast$ & $\ci$ & $\cratio$ \\
		\raisebox{10mm}{\rotatebox{90}{$n=4$}}
		&
		\includegraphics[width=0.3\linewidth]{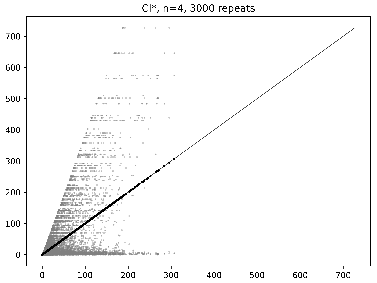}
		&
		\includegraphics[width=0.3\linewidth]{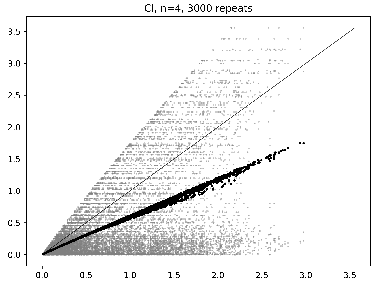}
		&
		\includegraphics[width=0.3\linewidth]{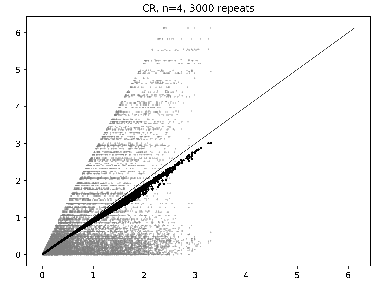}
		\\
		\raisebox{10mm}{\rotatebox{90}{$n=5$}}
		&
		\includegraphics[width=0.3\linewidth]{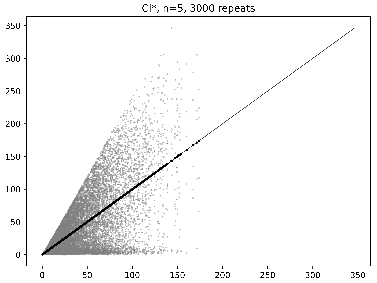}
		&
		\includegraphics[width=0.3\linewidth]{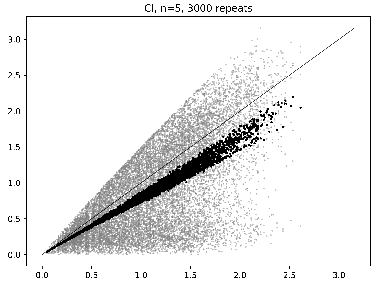}
		&
		\includegraphics[width=0.3\linewidth]{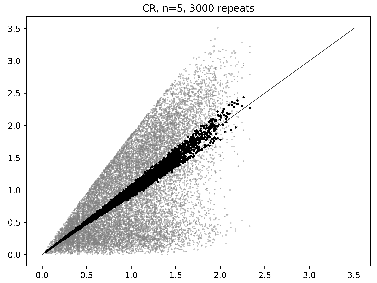}
		\\
		\raisebox{10mm}{\rotatebox{90}{$n=6$}}
		&
		\includegraphics[width=0.3\linewidth]{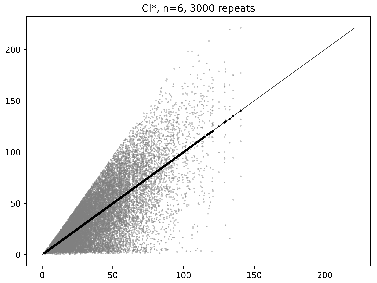}
		&
		\includegraphics[width=0.3\linewidth]{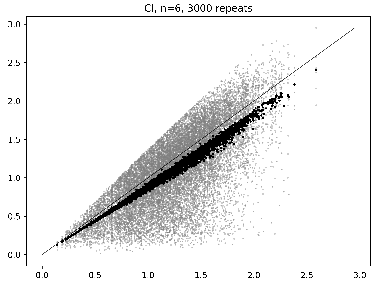}
		&
		\includegraphics[width=0.3\linewidth]{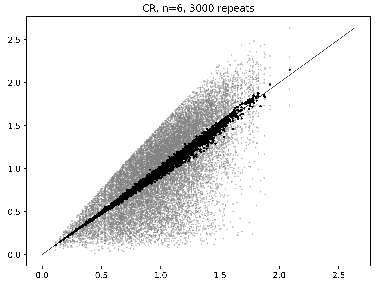}
		\\
		\raisebox{10mm}{\rotatebox{90}{$n=7$}}
		&
		\includegraphics[width=0.3\linewidth]{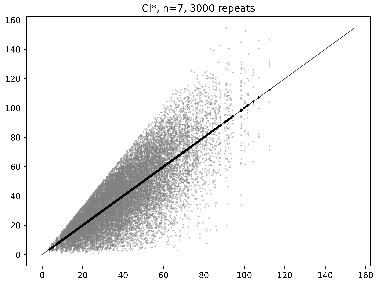}
		&
		\includegraphics[width=0.3\linewidth]{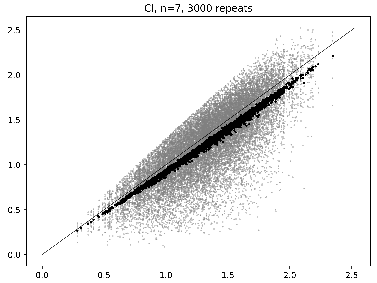}
		&
		\includegraphics[width=0.3\linewidth]{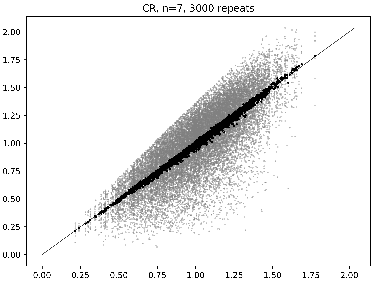}
	\end{tabular}
	\caption{Super-PCM index values (horizontal) vs. sub-PCM index values (vertical).  Grey: individual sub-PCMs; black: averages.}
	\label{fig:super_vs_sub}
\end{figure}

Naturally, for $\ciast$ satisfying the average-preserving property, the black dots lie precisely on a line with a slope $1$.
For $\ci$, the black dots deviate considerably from the line when $n=4$ but approach it as $n$ increases.
For $\cratio$, the black dots are off the line when $n=4$, but for $n \geq 5$, they align more closely with it, indicating behaviour consistent with the average-preserving property.
This suggests that, except in the case $n=4$, normalisation by the $\randindex$ has achieved its intended effect.

Another observation is that the vertical spread of the grey dots is larger for $\ciast$ than for $\ci$ and $\cratio$, possibly reflecting a greater variability in the sub-PCM index values derived from the same super-PCM.

In addition, for all indices, the upper bound of the sub-PCM index values appears to follow a linear pattern, suggesting that they may be bounded by a linear function of the super-PCM index value.
The upper bound for $\ci$ has already been discussed in Section~\ref{sec:ci_and_property}.

Fig.~\ref{fig:upper_bound} shows the scatter plots of $\ci$ with the theoretical upper bound
\begin{equation}
	\ci(\submat{s}) \leq \frac{(n-1)\ci(A)+1}{n-2}
	\label{eq:ci_upper_bound}
\end{equation}
superimposed as dashed lines in the figures.
It can be observed that the $\ci$ values of the sub-PCMs do not reach this limit but remain below it.
Therefore, the inequality~\eqref{eq:ci_upper_bound} is not tight, and the actual strict upper bound may lie slightly lower, suggesting the possibility of a sharper bound.

\begin{figure}
	\centering
	\begin{tabular}{cc}
		\includegraphics[width=0.4\linewidth]{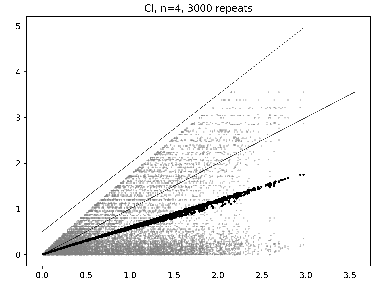}
		      &
		\includegraphics[width=0.4\linewidth]{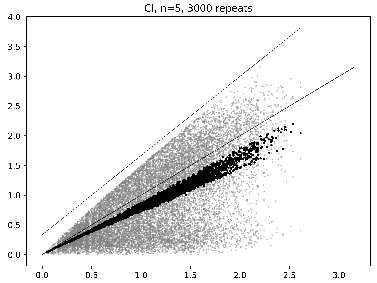}
		\\[-0.5ex]
		$n=4$ & $n=5$
		\\[2ex]
		\includegraphics[width=0.4\linewidth]{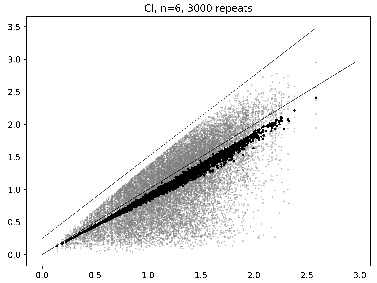}
		      &
		\includegraphics[width=0.4\linewidth]{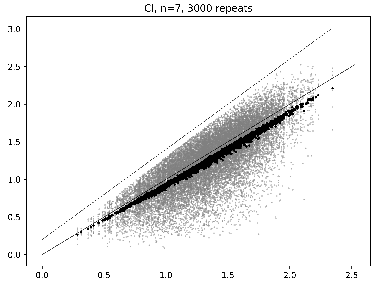}
		\\[-0.5ex]
		$n=6$ & $n=7$
	\end{tabular}
	\caption{Scatter plots of $\ci$ values for super-PCMs and their sub-PCMs, with the theoretical upper bound shown as a line}
	\label{fig:upper_bound}
\end{figure}

\section{Conclusions}
\label{sec:conclusions}

This study focuses on consistency indices that are independent of PCM size.
To clarify the relationship between PCMs of different sizes, we first define the concepts of sub-PCMs and super-PCMs.
Based on the assertion that a sub-PCM and its super-PCM should exhibit comparable consistency, we propose the average-preserving property as a desirable feature of size-independent indices.
We demonstrate that an index equivalent to $\ciast$ satisfies this property, providing a rigorous justification for $\ciast$ as a size-independent consistency index.
Finally, through numerical experiments with randomly generated PCMs, we visualise how the index values of super-PCMs and their sub-PCMs are related and confirm the theoretical results.

The relationship between a super-PCM and its sub-PCMs can be used to identify the comparisons that impair consistency.
Although we assumed that a super-PCM and its sub-PCMs exhibit comparable consistencies, certain comparisons may be inconsistent in practice.
In such cases, the consistency of the sub-PCM obtained by removing items involved in inconsistent comparisons is likely to be higher than that of the super-PCM.
When a super-PCM exhibits low consistency, sub-PCMs with higher consistency can be identified, indicating that the removed items contribute to the overall inconsistency.

The concept of sub-PCMs provides a richer mathematical perspective for analysing PCMs.
Even when two super-PCMs have the same overall index value, the distribution of consistencies of their sub-PCMs can differ significantly.
In some cases, most sub-PCMs are nearly consistent, and only a few extremely inconsistent sub-PCMs lower the overall consistency.
In other cases, all the sub-PCMs show a moderate and uniform level of inconsistency.
Thus, although super-PCMs may appear equally consistent, the \emph{robustness} of their pairwise comparisons differs.
From a mathematical perspective, this variability can be interpreted as revealing a part of the internal structure of PCMs, suggesting avenues for further study in terms of algebraic identities, probabilistic models of random PCMs, and the geometric properties of the PCM space.

Further investigation of these ideas, including the average-preserving property for consistency indices beyond $\ciast$, $\ci$, and $\cratio$, remains a topic for future research.
Developing a more systematic mathematical framework to describe the interaction between super- and sub-PCMs may provide deeper insights into the theoretical foundations of consistency in AHP.

The measurement of consistency in pairwise comparisons is likely to remain an open and hot topic in the future as well.

\subsection*{Acknowledgement}

The authors are grateful to Professor Konrad Ku\l{}akowski for his kind support during the submission process.
This work was supported by JSPS KAKENHI Grant Number 24K07950.
We would like to thank Editage (www.editage.jp) for English language editing.

\end{document}